\newtheorem{lemma}{Lemma}
\newtheorem{theorem}{Theorem}
\newcommand{\bh}{{\bf h}}
\newcommand{\bw}{{\bf w}}
\newcommand{\bH}{{\bf H}}
\newcommand{\bG}{{\bf G}}
\newcommand{\bx}{{\bf x}}
\newcommand{\bv}{{\bf v}}
\newcommand{\bheff}{{\bf h}^{\textrm{eff}}}
\newcommand{\yeff}{y^{\textrm{eff}}}
\begin{document}

\title{Antenna Combining for the MIMO Downlink Channel}
\author{\authorblockN{Nihar Jindal} \\
\authorblockA{University of Minnesota, Department of ECE\\
Minneapolis, MN 55455, USA \\
Email: nihar@umn.edu}}
\maketitle

\begin{abstract}
A multiple antenna downlink channel where limited channel
feedback is available to the transmitter is considered.  In
a vector downlink channel (single antenna
at each receiver), the transmit antenna array can be used to transmit separate
data streams to multiple receivers only if the transmitter has very
accurate channel knowledge, i.e., if there is high-rate channel
feedback from each receiver. In this work it is shown that
channel feedback requirements can be significantly reduced if
each receiver has a small number of antennas and
appropriately combines its antenna
outputs. A combining method that minimizes channel
quantization error at each receiver, and thereby minimizes
multi-user interference, is proposed and analyzed.
This technique is shown to outperform traditional techniques such as
maximum-ratio combining because minimization of interference
power is more critical than maximization of signal power in
the multiple antenna downlink.  Analysis is provided to
quantify the feedback savings, and the technique is seen to
work well with user selection  and is also robust to receiver
estimation error.

\end{abstract}

\section{Introduction}

Multi-user MIMO techniques such as zero-forcing beamforming allow
for simultaneous transmission of multiple data streams even when
each receiver (mobile) has only a single antenna, but very accurate
channel state information (CSI) is generally required at the
transmitter in order to utilize such techniques. In the practically
motivated \textit{finite rate feedback} model, each mobile feeds
back a finite number of bits describing its channel realization at
the beginning of each block or frame. In the \textit{vector}
downlink channel (multiple transmit antennas, single antenna at each
receiver), the feedback bits are determined by quantizing the
channel vector to one of $2^{B}$ quantization vectors. While a
relatively small number of feedback bits suffice to obtain
near-perfect CSIT performance in a point-to-point vector/MISO
(multiple-input, single-output) channel
\cite{Love_Heath_Santipach_Honig}, considerably more feedback is
required in a vector downlink channel. If zero-forcing beamforming
(ZFBF) is used, the feedback rate must be scaled with the number of
transmit antennas as well as SNR in order to achieve rates close to
perfect CSIT systems \cite{Jindal_Finite_BC_Journal}. In such a
system the transmitter emits multiple beams and uses its channel
knowledge to select beamforming vectors such that nulls are created
at certain users.  Inaccurate CSI leads to inaccurate nulling and
thus translates directly into multi-user interference and reduced
SINR/throughput.

In this paper we consider the MIMO downlink channel, in which the transmitter
and each mobile have multiple antennas ($M$ transmit antennas, $N$ antennas per mobile),
in the same limited feedback setting.  We propose a receive antenna combining technique,
dubbed \textit{quantization-based combining} (QBC), that converts the MIMO downlink
into a vector downlink in such a way that the system is able to operate with
reduced channel feedback. Each mobile linearly combines its $N$ antenna outputs
and thereby creates a single antenna channel.  The resulting vector channel is
quantized and fed back, and transmission
is then performed as in a normal vector downlink channel.

With QBC the combiner weights are chosen on the basis of both the channel and
the vector quantization codebook to produce the effective single antenna
channel that can be quantized most accurately.  On the other hand,
traditional combining techniques such as the maximum-ratio based technique
that is optimal for point-to-point MIMO channels with limited channel
feedback \cite{Love_Heath} or direct quantization of the maximum eigenmode
are aimed towards maximization of received signal power but generally
do not minimize channel quantization error.  Since channel quantization
error is so critical in the MIMO downlink channel, quantization-based
combining leads to better performance by minimizing quantization error
(i.e., interference power) possibly at the expense of channel (i.e., signal)
power.

One way to view the advantage of QBC is through its reduced feedback
requirements relative to the vector downlink channel.
In \cite{Jindal_Finite_BC_Journal} it is shown that scaling (per mobile)
feedback as $B = \frac{M-1}{3} P_{dB}$, where $P$ represents the SNR,
suffices to maintain a maximum gap of 3 dB (equivalent to 1 bps/Hz per mobile)
between perfect CSIT and limited feedback performance in a vector downlink
channel employing ZFBF.
With QBC, our analysis shows that the same throughput
(3 dB away from a vector downlink with perfect CSIT) can be achieved if
feedback is scaled at the slower rate of
$B \approx \frac{M-N}{3} P_{dB}$.  In other words, QBC allows
a MIMO downlink to mimic vector downlink performance with reduced
channel feedback.

Alternatively, QBC can be thought of as an effective method to utilize
multiple receive antennas in a downlink channel in the presence of
limited channel feedback.  Although it is possible to send multiple
streams to each mobile if receive combining is not performed, this
requires even more feedback from each mobile than a single-stream approach.
In addition, QBC has the advantage that the transmitter need not be
aware of the number of receive antennas being used.

The remainder of this paper is organized as follows: In
Section \ref{sec-system} we introduce the system model and
some preliminaries.  In Section \ref{sec-selection} we
describe a simple antenna selection method that leads directly
into Section \ref{sec-quantization} where the much more
powerful quantization-based combining technique is described in detail.
In Section \ref{sec-tput} we analyze the throughput and feedback
requirements of QBC.  In Section \ref{sec-compare}
we compare QBC to alternative MIMO downlink techniques,
and finally we conclude in Section \ref{sec-conclusion}.


\section{System Model and Preliminaries} \label{sec-system}

We consider a $K$ mobile (receiver) downlink
channel in which the transmitter (access point) has
$M$ antennas, and each of the mobiles has $N$ antennas.
The received signal at the $i$-th antenna is given by:
\begin{equation}
y_i = {\mathbf h}_i^H {\mathbf x} + n_i, ~~~
   i=1,\ldots,NK
\end{equation}
where ${\bf h}_1, {\bf h}_2, \ldots, {\bf
        h}_{KN}$ are the channel vectors (with ${\bf h}_i \in
        {\mathbb{C}}^{M \times 1}$) describing the $KN$
receive antennas, ${\bf x} \in {\mathbb{C}}^{M \times 1}$ is the
transmitted vector, and ${\bf n}_1, \ldots, {\mathbf n}_{NK}$ are
independent complex Gaussian noise terms with unit variance. The
$k$-th mobile has access to $y_{(k-1)N+1},\ldots, y_{Nk}$. The input
must satisfy a power constraint of $P$, i.e. $E[||{\mathbf x}||^2]
\leq P$. We use $\bH_k$ to denote the concatenation of the $k$-th
mobile's channels, i.e. $\bH_k = [\bh_{(k-1)N+1} \cdots \bh_{Nk}]$.
We consider a block fading channel with iid Rayleigh fading from
block to block, i.e., the channel coefficients are iid complex
Gaussian with unit variance. Each of the mobiles is assumed to have
perfect knowledge of its own channel $\bH_i$, although we analyze
the effect of relaxing this assumption in Section
\ref{sec-rx_error}. In this work we study only the \textit{ergodic
capacity}, or the long-term average throughput. Furthermore, we only
consider systems for which $N < M$ because QBC is not very useful if
$N \geq M$; this point is briefly discussed in Section
\ref{sec-quantization}.

\subsection{Finite Rate Feedback Model} \label{sec-finite}
In the finite rate feedback model, each mobile
quantizes its channel to $B$ bits and feeds back
the bits perfectly and instantaneously to the transmitter at the
beginning of each block \cite{Love_Heath}\cite{Mukkavilli_MIMO}.
Vector quantization is performed using a codebook ${\mathcal C}$
of $2^{B}$ $M$-dimensional unit norm vectors
${\mathcal C} \triangleq \{ \mathbf{w}_1, \ldots, \mathbf{w}_{2^{B}} \}$,
and each mobile quantizes its channel
to the quantization vector that forms the minimum angle to it
\cite{Love_Heath} \cite{Mukkavilli_MIMO}:
\begin{eqnarray}
\hat{\bh}_k
&=& \textrm {arg} \min_{\bw = \bw_1,\ldots,\bw_{2^{B}}}
\sin^2 \left( \angle ({\bf h}_k, \bw) \right). \label{eq-quant}
\end{eqnarray}

For analytical tractability, we study systems using \textit{random
vector quantization} (RVQ) in which each of the $2^{B}$ quantization
vectors is independently chosen from the isotropic distribution on
the $M$-dimensional unit sphere and where each mobile uses an
independently generated codebook \cite{Santipach_Honig}. We analyze
performance averaged over random codebooks; similar to Shannon's
random coding argument, there always exists at least one
quantization codebook that performs as well as the ensemble average.

\subsection{Zero-Forcing Beamforming} \label{sec-zfbf}
After receiving the quantization indices from each of the mobiles,
the AP can use zero-forcing beamforming (ZFBF) to transmit data to
up to $M$ users.  For simplicity let us consider the $N=1$ scenario,
where the channels are the vectors $\bh_1, \ldots, \bh_M$. When ZFBF
is used, the transmitted signal is defined as $\bx = \sum_{k=1}^M
x_k \bv_k $, where each $x_k$ is a scalar (chosen complex Gaussian)
intended for the $k$-th mobile, and $\bv_k \in {\mathcal C}^M$ is
the $k$-th mobile's BF vector. If there are $M$ mobiles (randomly
selected), the beamforming vectors $\bv_1, \ldots, \bv_M$ are chosen
as the normalized rows of the matrix $[ \hat{\bh}_1 \cdots
\hat{\bh}_M]^{-1}$, i.e., they satisfy $||\bv_k|| = 1$ for all $k$
and  $\hat{\bh}_k^H \bv_j = 0$ for all $j \ne k$. If all multi-user
interference is treated as additional noise and equal power loading
is used, the resulting SINR at the $k$-th receiver is given by:
\begin{eqnarray} \label{eq-downlink_sinr}
SINR_k = \frac { \frac{P}{M} | \bh_k^H \bv_k |^2 }
{1 + \sum_{j \ne k}  \frac{P}{M} | \bh_k^H \bv_j |^2 }.
\end{eqnarray}
The coefficient that determines the amount of interference received
at mobile $k$ from the beam intended for mobile $j$, $| \bh_k^H \bv_j |^2$,
is easily seen to be an increasing function of mobile $k$'s quantization error.

In the above expression we have assumed that $M$ mobiles are
randomly selected for transmission and that equal power is allocated to each
mobile.  However, the throughput of zero-forcing based MIMO downlink channels
can be significantly increased by transmitting to an intelligently
selected subset of mobiles \cite{YooGoldsmith06}.  In order to
maximize throughput, users with nearly orthogonal channels and
with large channel magnitudes are selected, and waterfilling can
be performed across the channels of the selected users.
In \cite{DS05} a low-complexity greedy algorithm that selects
users and performs waterfilling is proposed. If this algorithm is
used, a zero-forcing based system can come quite close to the true
sum capacity of the MIMO downlink, even for a moderate number of users.

\subsection{MIMO Downlink with Single Antenna Mobiles} \label{sec-singleant}

In \cite{Jindal_Finite_BC_Journal}
the vector downlink channel ($N=1$)
is analyzed assuming that equal power ZFBF is performed without
user selection on the basis of finite rate feedback (with RVQ).  The basic
result of \cite{Jindal_Finite_BC_Journal} is that:
\begin{eqnarray} \label{eq-rateloss}
R_{FB}(P) \geq R_{CSIT}(P) - \log_2 \left(1 + P \cdot
E \left[\sin^2 \left( \angle(\hat{\bh_k}, \bh_k ) \right) \right] \right)
\end{eqnarray}
where $R_{FB}(P)$ and $R_{CSIT}(P)$ are the ergodic per-user
throughput with feedback and with perfect CSIT, respectively, and the quantity
$E \left[\sin^2 \left( \angle(\hat{\bh_k}, \bh_k ) \right) \right]$ is
the expected quantization error.  The expected quantization error
can be accurately upper bounded by $2^{-\frac{B}{M-1}}$ and therefore the
throughput loss due to limited feedback is upper bounded by
$\log_2 \left(1 + P \cdot 2^{-\frac{B}{M-1}} \right)$, which is
an increasing function of the SNR $P$.  If the number of
feedback bits (per mobile) is scaled with $P$ according to:
\begin{eqnarray*}
B = (M-1) \log_2 P \approx \frac{M-1}{3} P_{dB},
\end{eqnarray*}
then the difference between $R_{FB}(P)$ and $R_{CSIT}(P)$
is upper bounded by $1$ bps/Hz at all SNR's, or equivalently
the power gap is at most 3 dB.
As the remainder of the paper shows, quantization-based
combining significantly reduces the quantization error
(more precisely, it increases the exponential rate at which quantization
error goes to zero as $B$ is increased) and therefore
decreases the rate at which $B$ must be increased as a function of SNR.

\section{Antenna Selection for Reduced Quantization Error}
\label{sec-selection}

In this section we describe a simple antenna selection method that
reduces channel quantization error.
Description of this technique is primarily included for expository reasons,
because the simple concept of antenna selection naturally
extends to the more complex (and powerful) QBC technique.
In point-to-point MIMO, antenna selection corresponds to choosing
the receive antenna with the largest channel gain,
while in the MIMO downlink the receive antenna that can be vector quantized
with minimal angular error is selected.
Mobile 1, which has channel matrix ${\bf H}_1 = [\bh_1 \cdots \bh_N]$ and
a single quantization
codebook consisting of $2^B$ quantization vectors $\bw_1,\ldots,\bw_{2^{B}}$,
first individually quantizes each of its $N$ vector channels $\bh_1, \ldots, \bh_N$
\begin{eqnarray}
\hat{\bf g}_i &=& \textrm {arg} \min_{\bw = \bw_1,\ldots,\bw_{2^{B}}}
\sin^2 \left( \angle ({\bf h}_i, \bw) \right) ~~~~ i=1,\ldots,N,
\end{eqnarray}
and then selects the antenna with the minimum quantization error:
\begin{eqnarray}
j = \textrm {arg} \min_{i = 1,\ldots,N}
\sin^2 \left( \angle ({\bf h}_i, \hat{{\bf g}_i}) \right),
\end{eqnarray}
and feeds back the quantization index corresponding to
$\hat{{\bf g}_j}$.  The mobile uses only antenna $j$ for
reception, and thus the system is effectively transformed
into a vector downlink channel.

Due to the independence of the channel
and quantization vectors, choosing the best of $N$ channel quantizations is
statistically equivalent to quantizing a single vector channel using a
codebook of size $N \cdot 2^{B}$.  Therefore,
antenna selection effectively increases the quantization
codebook size from $2^B$ to $N \cdot 2^{B}$, and
thus the system achieves the same throughput as a vector
downlink with $B + \log_2 N$ feedback bits.
Although not negligible, this advantage is much smaller than that
provided by quantization-based combining.

\section{Quantization-Based Combining}
\label{sec-quantization}

In this section we describe the quantization-based combining (QBC)
technique that reduces channel quantization error by appropriately
combining receive antenna outputs.  We consider a linear
combiner at each mobile, which effectively converts each multiple antenna
mobile into a single antenna receiver.  The combiner structure for a 3 user
channel with 3 transmit antennas ($M=3$) and 2
antennas per mobile ($N=2$) is shown in  Fig. \ref{fig-effective}.
Each mobile linearly combines its $N$ outputs, using
appropriately chosen combiner weights, to produce a scalar
output (denoted by $\yeff_k$).  The effective channel
describing the channel from the transmit antenna array to the
effective output of the $k$-th mobile ($\yeff_k$) is simply
a linear combination of the $N$ vectors describing the
$N$ receive antennas.  After choosing combining weights
the mobile quantizes the effective channel vector
and feeds back the appropriate quantization
index.  Only the effective channel output is used to receive data,
and thus each mobile effectively has only one antenna.

The key to the technique is to \textit{choose combiner weights
that produce an effective channel that can be quantized very accurately};
such a choice must be made on the basis of both
the channel vectors and the quantization codebook.  This is
quite different from maximum ratio combining,
where the combiner weights and quantization vector are chosen such that
received signal power is maximized but quantization error is
generally not minimized.
Note that antenna selection corresponds to choosing the effective channel from
the $N$ columns of ${\bf H}_k$,  while QBC allows for
any linear combination of these $N$ column vectors.


\begin{figure}
    \centering
    \epsfig{file=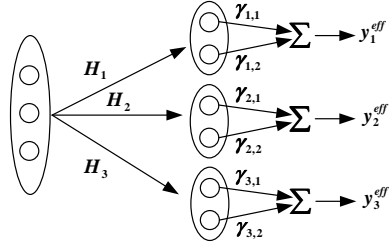, width=2in}
\caption{Effective Channel for $M=K=3$, $N=2$ System}
   \label{fig-effective}
\end{figure}

\subsection{General Description} \label{sec-description}
Let us consider the effective received signal at the first mobile
for some choice of combiner weights, which we denote as
$\boldsymbol{\gamma}_1=(\gamma_{1,1}, \ldots, \gamma_{1,N})$.
In order to maintain a noise variance of one, the combiner weights
are constrained to have unit norm: $||\boldsymbol{\gamma}_1||=1$.  The (scalar)
combiner output, denoted $\yeff_1$, is:
\begin{eqnarray*}
\yeff_1 =
\sum_{i=1}^N  \gamma_{1,i}^H (\bh_i^H \bx + n_i)
&=& \left( \sum_{i=1}^N  \gamma_{1,i}^H \bh_i^H \right) \bx +
\sum_{i=1}^N  \gamma_{1,i}^H n_k \\
&=& (\bheff_1)^H  {\bf x} + n,
\end{eqnarray*}
where $n = \sum_{i=1}^N  \gamma_{1,i}^H n_i$ is
unit variance complex Gaussian because $|\boldsymbol{\gamma}_1|=1$.
The effective channel vector $\bheff_1$
is simply a linear combination of the vectors $\bh_1,\ldots, \bh_N$:
$\bheff_1 = \sum_{i=1}^N  \gamma_{1,i}\bh_i = \bH_1 {\boldsymbol{\gamma}}_1$.
Since $\boldsymbol{\gamma}_1$ can be any unit norm vector, $\bheff_1$
can be in any direction in the $N$-dimensional subspace spanned by $\bh_1,
\ldots, \bh_N$, i.e., in span$(\bH_1)$.\footnote{By
well known properties of iid Rayleigh fading, the matrix ${\bH_1}$
is full rank with probability one \cite{Tulino_Verdu}.}

Because quantization error is so critical to performance, the
objective is to choose combiner weights that yield an effective
channel that can be quantized with minimal error. The error
corresponding to effective channel $\bheff_1$ is
\begin{eqnarray}
\min_{l = 1,\ldots,2^B} \sin^2 \left(
\angle (\bheff_1, \bw_l) \right).
\end{eqnarray}
Therefore, the optimal choice of the effective channel is the solution to:
\begin{eqnarray} \label{eq-double_min}
\min_{\bheff_1} ~~ \min_{l = 1,\ldots,2^B} \sin^2 \left(
\angle (\bheff_1, \bw_l) \right),
\end{eqnarray}
where $\bheff_1$ is allowed to be in any direction in span$(\bH_1)$.
Once the optimal effective channel is determined, the combiner weights
$\boldsymbol{\gamma}_1$ can be determined through a simple pseudo-inverse operation.

Since the expression for the optimum effective channel given in (\ref{eq-double_min})
consists of two minimizations, without loss of optimality
the order of the minimization can be switched to give:
\begin{eqnarray} \label{eq-double_minb}
\min_{l = 1,\ldots,2^B}  ~ \min_{\bheff_1} ~~
\sin^2 \left( \angle (\bheff_1, \bw_l) \right),
\end{eqnarray}
For each quantization vector $\bw_l$, the inner minimization finds
the effective channel vector in span$(\bH_1)$ that forms the minimum
angle with $\bw_l$. By basic geometric principles, the minimizing
$\bheff_1$ is the projection of $\bw_l$ on span$(\bH_1)$. The
solution to the inner minimization in (\ref{eq-double_minb}) is
therefore the sine squared of the angle between $\bw_l$ and its
projection on span$(\bH_1)$, which is referred to as the angle
between $\bw_l$ and the subspace\footnote{If the number of mobile
antennas is equal to the number of transmit antennas ($N = M$), the
channel vectors span $\mathcal{C}^M$ with probability one.
Therefore, each quantization vector has zero angle with the channel
subspace and as a result the solution to the inner minimization in
(\ref{eq-double_minb}) is trivially zero for each $\bw_l$. Thus,
performing quantization with the sole objective of minimizing
angular error (i.e., QBC) is not meaningful when $N=M$ and is
therefore not studied here.}. As a result, the best quantization
vector, i.e., the solution of (\ref{eq-double_minb}), is the vector
that forms the smallest angle between itself and span$(\bH_1)$.  The
optimal effective channel is the (scaled) projection of this
particular quantization vector onto span$(\bH_1)$.

In order to perform quantization, the angle between each
quantization vector and span$(\bH_1)$ must be computed. If ${\bf
q}_1, \ldots, {\bf q}_N$ form an orthonormal basis for span$(\bH_1)$
and ${\bf Q}_1 \triangleq [{\bf q}_1 \cdots {\bf q}_N]$, then
$\sin^2 (\angle( \bw, \textrm{span}(\bH_1)) ) =
1 - ||{\bf Q}_1^H \bw ||^2$.
Therefore, mobile 1's quantized channel, denoted $\hat{\bh_1}$, is:
\begin{eqnarray}
\hat{\bh}_1
= \textrm {arg} \min_{\bw = \bw_1,\ldots,\bw_{2^{B}}}
| \angle( \bw, \textrm{span}(\bH_1)) |
&=& \textrm {arg} \max_{\bw = \bw_1,\ldots,\bw_{2^{B}}} ||{\bf Q}_1^H \bw ||^2.
\end{eqnarray}

Once the quantization vector has been selected, it only remains to
choose the combiner weights.  The projection of $\hat{\bh}_1$ on
span$(\bH_1)$, which is equal to ${\bf Q}_1 {\bf Q}_1^H
\hat{\bh}_1$, is scaled by its norm to produce the unit norm vector
${\bf s}_1^{\textrm{proj}}$. The \textit{direction} specified by
${\bf s}_1^{\textrm{proj}}$ has the minimum quantization error
amongst all directions in span$(\bH_1)$, and therefore the effective
channel should be chosen in this direction. First we find the vector
$\bf{u}_1 \in \mathcal{C}^N$ such that ${\bf H}_1 {\bf u}_1 = {\bf
s}_1^{\textrm{proj}}$,
 and then scale to get  ${\boldsymbol \gamma}_1$.
Since ${\bf s}_1^{\textrm{proj}}$ is in span$(\bH_1)$,
${\bf u}_1$ is uniquely determined by the pseudo-inverse of $\bH_1$:
\begin{eqnarray} \label{eq-u2}
{\bf u}_1 = \left( {\bf H}_1^H {\bf H}_1 \right)^{-1}
{\bf H}_1^H {\bf s}_1^{\textrm{proj}},
\end{eqnarray}
and the combiner weight vector  ${\boldsymbol \gamma}_1$ is the
normalized version of ${\bf u}_1$: $\boldsymbol{\gamma} = \frac {
{\bf u}_1} { ||{\bf u}_1|| }$. The quantization procedure is
illustrated for a $N=2$ channel in Fig. \ref{fig-projection}.  In
the figure the span of the two channel vectors is shown along with
the quantization vector $\bh_1$, its projection on the channel
subspace, and the effective channel.

\subsection{Algorithm Summary}
We now summarize the quantization-based combining procedure performed
at the $k$-th mobile:
\begin{enumerate}

\item Find an orthonormal basis, denoted ${\bf q}_1, \ldots, {\bf q}_N$,
 for span($\bH_k$) and define
${\bf Q}_k \triangleq [{\bf q}_1 \cdots {\bf q}_N]$.

\item Find the quantization vector closest to the channel subspace:
\begin{eqnarray}
\hat{\bh}_k &=& \textrm {arg} \max_{\bw = \bw_1,\ldots,\bw_{2^{B}}} ||{\bf Q}_k^H \bw ||^2.
\end{eqnarray}

\item
Determine the direction of the effective channel by
projecting $\hat{\bh_k}$ onto span($\bH_k$).
\begin{eqnarray}
{\bf s}_k^{\textrm{proj}} = \frac{ {\bf Q}_k {\bf Q}_k^H \hat{\bh_k} }
{ ||{\bf Q}_k {\bf Q}_k^H \hat{\bh_k}|| }.
\end{eqnarray}

\item
Compute the combiner weight vector ${\boldsymbol \gamma}_k$:
\begin{eqnarray} \label{eq-coeff}
{\boldsymbol \gamma}_k = \frac { \left( {\bf H}_k^H {\bf H}_k \right)^{-1} {\bf H}_k^H {\bf s}_1^{\textrm{proj}} }
{ || \left( {\bf H}_k^H {\bf H}_k \right)^{-1} {\bf H}_k^H {\bf s}_1^{\textrm{proj}} || }.
\end{eqnarray}
\end{enumerate}

Each mobile performs these steps, feeds back the index of its
quantized channel $\hat{\bh_k}$, and then linearly combines its $N$
received signals using vector ${\boldsymbol \gamma}_k$ to produce
its effective channel output $y^{\textrm{eff}}_k = (\bheff_k)^H {\bf
x} + n$ with $\bheff_k = \bH_k {\boldsymbol \gamma}_k$. Note that
the transmitter need not be aware of the number of receive antennas
or of the details of this procedure because the downlink channel
appears to be a single receive antenna channel from the
transmitter's perspective; this clearly eases the implementation
burden of QBC.


\begin{figure}
    \centering
    \epsfig{file=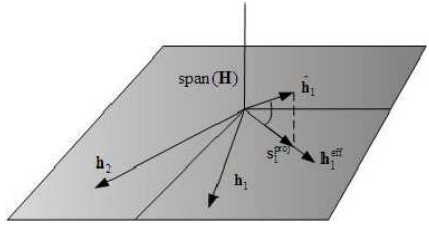, width=2.2in}
\caption{Quantization procedure for a two antenna mobile}
   \label{fig-projection}
\end{figure}

\section{Throughput Analysis} \label{sec-tput}

Quantization-based combining converts the MIMO downlink channel into a
vector downlink with channel vectors
$\bheff_1, \ldots, \bheff_K$ and channel quantizations
$\hat{\bh}_i \cdots \hat{\bh}_K$.
We first derive the statistics of the effective vector channel,
then analyze throughput for ZFBF with equal power
loading and no user selection, and finally quantify the effect of receiver
estimation error.

\subsection{Channel Statistics}
We first determine the distribution of the quantization error and
the effective channel vectors with respect to both the random channels and random
quantization codebooks.

\begin{lemma} \label{lemma-quant_error}
The quantization error $\sin^2 ( \angle( \hat{\bh_k}, \bheff_k ))$,
is the minimum of $2^B$ independent beta $(M-N,N)$ random variables.
\end{lemma}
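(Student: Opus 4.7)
The plan is to reduce the quantization error to the minimum of $2^B$ i.i.d.\ quantities, each of which is a squared sine of the angle between a random unit vector and a fixed $N$-dimensional subspace, and then to identify the distribution of such a quantity as $\text{beta}(M-N,N)$.

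First, I would recall from the derivation in Section \ref{sec-quantization} that the selected quantization vector is $\hat{\bh}_k = \arg\max_{\bw \in \mathcal{C}} \|\bQ_k^H \bw\|^2$ and that the effective channel $\bheff_k = \bH_k \boldsymbol{\gamma}_k$ is (a positive multiple of) the projection of $\hat{\bh}_k$ onto $\text{span}(\bH_k)$. Because $\bheff_k$ lies in the direction of this projection, the angle between $\hat{\bh}_k$ and $\bheff_k$ is exactly the angle between $\hat{\bh}_k$ and the subspace $\text{span}(\bH_k)$, so
\begin{equation*}
\sin^2\!\big(\angle(\hat{\bh}_k,\bheff_k)\big) \;=\; 1-\|\bQ_k^H \hat{\bh}_k\|^2 \;=\; \min_{l=1,\ldots,2^B}\bigl(1-\|\bQ_k^H \bw_l\|^2\bigr).
\end{equation*}

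Next I would analyze the distribution of a single term $X_l := 1-\|\bQ_k^H \bw_l\|^2$. The codebook vectors $\bw_l$ are drawn independently of $\bH_k$ from the isotropic distribution on the unit sphere in $\mathbb{C}^M$, and $\bQ_k$ is a deterministic function of $\bH_k$ whose columns form an orthonormal basis of an $N$-dimensional subspace. Conditioning on $\bH_k$ and using the unitary invariance of the isotropic distribution, I can rotate coordinates so that $\bQ_k = [\bI_N\;\;\mathbf{0}]^H$. Then $\|\bQ_k^H \bw_l\|^2$ is simply the squared norm of the first $N$ components of an isotropic unit vector in $\mathbb{C}^M$. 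A standard computation — e.g., by writing $\bw_l = \bz/\|\bz\|$ with $\bz \sim \mathcal{CN}(0,\bI_M)$ and noting that $\|\bz_{1:N}\|^2/\|\bz\|^2$ is the ratio of a $\chi^2_{2N}$ to an independent $\chi^2_{2(M-N)}$ plus itself — yields $\|\bQ_k^H \bw_l\|^2 \sim \text{beta}(N,M-N)$, hence $X_l \sim \text{beta}(M-N,N)$. Because the resulting conditional distribution does not depend on $\bH_k$, the unconditional distribution of $X_l$ is also $\text{beta}(M-N,N)$.

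Finally, since the $\bw_l$ are mutually independent and each $X_l$ is a function of $\bw_l$ alone (given $\bH_k$), the random variables $X_1,\ldots,X_{2^B}$ are conditionally independent given $\bH_k$, with the same marginal $\text{beta}(M-N,N)$ law; this conditional independence, combined with the identical marginals that do not depend on $\bH_k$, yields the stated unconditional joint distribution. The quantization error is therefore the minimum of $2^B$ i.i.d.\ $\text{beta}(M-N,N)$ variates. The only subtle point is the need to condition on $\bH_k$ and then invoke rotational invariance to get a distribution independent of $\bH_k$; the rest is a standard identification of the projection length distribution.
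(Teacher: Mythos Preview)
Your proof is correct and follows essentially the same approach as the paper: identify $\sin^2(\angle(\hat{\bh}_k,\bheff_k))$ with $\min_l(1-\|\bQ_k^H\bw_l\|^2)$, recognize each term as the squared sine of the angle between an isotropic unit vector and an $N$-dimensional subspace (hence beta$(M-N,N)$), and use independence of the codebook vectors to conclude. Your write-up is in fact more explicit than the paper's, which simply cites the beta distribution for the projection length and asserts independence; your conditioning-on-$\bH_k$ argument and Gaussian-ratio derivation fill in details the paper leaves to a reference.
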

\begin{proof}
If the columns of $M \times N$ matrix ${\bf Q}_k$ form an
orthonormal basis for span$(\bH_k)$, then $\cos^2 \left(\angle( {\bf
w}_l, \textrm{span}(\bH_k) \right) = ||{\bf Q}_k^H {\bf w}_l||^2$
for any quantization vector. Since the basis vectors and
quantization vectors are isotropically chosen and are independent,
this quantity is the squared norm of the projection of a random unit
norm vector in ${\mathcal C}^M$ onto a random $N$-dimensional
subspace, which is described by the beta distribution with
parameters $N$ and $M-N$ \cite{Beta}. By the properties of the beta
distribution, $\sin^2 \left(\angle( {\bf w}_l, \textrm{span}(\bH_k)
\right) = 1 - \cos^2 \left(\angle( {\bf w}_l, \textrm{span}(\bH_k)
\right)$ is beta $(M-N,N)$. Finally, the independence of the
quantization and channel vectors implies independence of the $2^B$
random variables.
\end{proof}

\begin{lemma} \label{lemma-effective_isotropic}
The normalized effective channels $\frac{\bheff_1}{||\bheff_1||}, \ldots,
\frac{\bheff_K}{||\bheff_K||}$ are iid isotropic vectors in
${\mathcal C}^M$.
\end{lemma}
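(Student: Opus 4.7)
The plan is to show the two claims separately: independence across mobiles is almost immediate from the independence of channels and of codebooks, while the isotropy of each normalized effective channel follows from a unitary-invariance (equivariance) argument applied to the QBC procedure.

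First, I would observe that by the construction in Section \ref{sec-description}, the normalized effective channel equals $\bheff_k / \|\bheff_k\| = {\bf s}_k^{\textrm{proj}}$, i.e., the normalized projection of the chosen quantization vector $\hat{\bh}_k$ onto span$(\bH_k)$. Since the channel matrices $\bH_1,\ldots,\bH_K$ are independent by the Rayleigh fading assumption and each mobile uses an independently generated RVQ codebook, the pair (channel, codebook) for different $k$ are mutually independent. The vector ${\bf s}_k^{\textrm{proj}}$ is a deterministic function of mobile $k$'s own channel and codebook, so $\{{\bf s}_k^{\textrm{proj}}\}_{k=1}^K$ are mutually independent; moreover, they are identically distributed since the distribution of (channel, codebook) does not depend on $k$.

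For the isotropy of a single normalized effective channel, I would fix an arbitrary deterministic unitary matrix $\bU \in \mathbb{C}^{M \times M}$ and show that $\bU {\bf s}_k^{\textrm{proj}}$ has the same distribution as ${\bf s}_k^{\textrm{proj}}$. Consider the transformation $(\bH_k, \{\bw_l\}) \mapsto (\bU \bH_k, \{\bU \bw_l\})$. Because iid complex Gaussian matrices are invariant in distribution under multiplication by a unitary matrix, $\bU\bH_k \stackrel{d}{=} \bH_k$; similarly, the isotropic distribution on the unit sphere is preserved, so $\{\bU\bw_l\} \stackrel{d}{=} \{\bw_l\}$. The key observation is that the QBC procedure is equivariant under $\bU$: if ${\bf Q}_k$ is an orthonormal basis of span$(\bH_k)$, then $\bU {\bf Q}_k$ is an orthonormal basis of span$(\bU \bH_k)$, and $\|(\bU {\bf Q}_k)^H (\bU \bw_l)\|^2 = \|{\bf Q}_k^H \bw_l\|^2$, so the same codebook index $l^*$ is selected. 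The projection onto the rotated subspace then gives $(\bU {\bf Q}_k)(\bU {\bf Q}_k)^H (\bU \hat{\bh}_k) = \bU \, {\bf Q}_k {\bf Q}_k^H \hat{\bh}_k$, and after normalization we obtain $\bU {\bf s}_k^{\textrm{proj}}$. Combining the distributional equality of the inputs with the equivariance of the output yields $\bU {\bf s}_k^{\textrm{proj}} \stackrel{d}{=} {\bf s}_k^{\textrm{proj}}$ for every unitary $\bU$, which is the definition of an isotropic distribution on the unit sphere in $\mathbb{C}^M$.

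I do not expect a serious obstacle here; the only subtlety is that $\hat{\bh}_k$ and span$(\bH_k)$ are not independent (the former is chosen as a function of the latter), so one cannot just invoke independence of the quantization and the subspace. The clean way around this is to treat (channel, codebook) as a single random object and apply the unitary action to both simultaneously, as above; then the equivariance of the entire selection-and-projection map carries the distributional identity through in one step without having to untangle the dependence.
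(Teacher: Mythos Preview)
Your argument is correct, and it takes a genuinely different route from the paper's. The paper proceeds by a two-step conditioning argument: conditioned on span$(\bH_k)$, each isotropic $\bw_l$ decomposes into an angle to the subspace and a projection direction that is isotropic \emph{within} the subspace; since the selection of the best quantization vector depends only on the angles, the chosen projection direction remains isotropic in span$(\bH_k)$; finally, because span$(\bH_k)$ is itself an isotropically distributed $N$-plane, the composition gives isotropy in $\mathcal{C}^M$. Your approach instead applies a single global unitary $\bU$ to the joint pair (channel, codebook) and observes that the entire QBC map is $\bU$-equivariant, so isotropy of the output follows in one stroke from the $\bU$-invariance of the input law. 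Your route is cleaner and sidesteps the need to check that the angle and the within-subspace direction are independent (and that the selection depends only on the former). The paper's decomposition, on the other hand, makes explicit the structure that is reused in Lemma~\ref{lemma-norm}: isolating the behavior of ${\bf s}_k^{\textrm{proj}}$ \emph{inside} span$(\bH_k)$ is exactly what lets one reduce the norm calculation to a statement about $\left[(\bH_k^H \bH_k)^{-1}\right]_{1,1}$. Your remark about the dependence between $\hat{\bh}_k$ and span$(\bH_k)$ is well taken and is precisely why either a conditional decomposition (as in the paper) or a joint equivariance argument (as you do) is needed rather than a naive ``project an isotropic vector onto an independent subspace'' claim.
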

\begin{proof}
From the earlier description of QBC, note
that $\frac{\bheff_k}{||\bheff_k||} = {\bf s}_k^{\textrm{proj}}$,
which is the projection of the best quantization vector onto
span$(\bH_k)$.  Since each quantization vector is chosen
isotropically, its projection is isotropically distributed within
the subspace.  Furthermore, the best quantization vector is chosen
based solely on the angle between the quantization vector and its
projection.  Thus ${\bf s}_k^{\textrm{proj}}$ is isotropically
distributed in span$(\bH_k)$.  Since this subspace is also
isotropically distributed, the vector ${\bf s}_k^{\textrm{proj}}$ is
isotropically distributed in ${\mathcal C}^M$.  Finally, the independence
of the quantization and channel vectors from mobile to mobile
implies independence of the effective channel directions.
\end{proof}

\begin{lemma} \label{lemma-norm}
The quantity $||\bheff_k||^2$ is $\chi^2_{2(M-N+1)}$.
\end{lemma}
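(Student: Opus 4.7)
The plan is to derive a closed-form expression for $\|\bheff_k\|^2$ in terms of $\bH_k$ and ${\bf s}_k^{\textrm{proj}}$, and then reduce the distributional question to a standard fact about the diagonal of an inverse complex Wishart matrix.

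First I would exploit the fact that $\bheff_k$ is, by construction, aligned with ${\bf s}_k^{\textrm{proj}}$. Writing $\bheff_k = \alpha \, {\bf s}_k^{\textrm{proj}}$ with $\alpha = \|\bheff_k\|$, combining this with $\boldsymbol{\gamma}_k = \alpha (\bH_k^H \bH_k)^{-1} \bH_k^H {\bf s}_k^{\textrm{proj}}$ (from (\ref{eq-coeff})) and imposing $\|\boldsymbol{\gamma}_k\| = 1$ immediately yields
\begin{equation*}
\|\bheff_k\|^2 \;=\; \frac{1}{({\bf s}_k^{\textrm{proj}})^H \bH_k (\bH_k^H \bH_k)^{-2} \bH_k^H \, {\bf s}_k^{\textrm{proj}}}.
\end{equation*}
Using the compact SVD $\bH_k = \bU \boldsymbol{\Sigma} \bV^H$ and writing ${\bf s}_k^{\textrm{proj}} = \bU {\bf a}$ with ${\bf a} \in \mathbb{C}^N$ a unit vector (possible because ${\bf s}_k^{\textrm{proj}} \in \mathrm{span}(\bH_k)$), a short calculation collapses the denominator to ${\bf b}^H (\bH_k^H \bH_k)^{-1} {\bf b}$, where ${\bf b} \triangleq \bV {\bf a}$, so that $\|\bheff_k\|^2 = 1/[{\bf b}^H (\bH_k^H \bH_k)^{-1} {\bf b}]$.

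The crucial probabilistic step is to show that ${\bf b}$ is an isotropic unit vector in $\mathbb{C}^N$ independent of $\bH_k^H \bH_k$. The argument used in the proof of Lemma \ref{lemma-effective_isotropic} already establishes that ${\bf s}_k^{\textrm{proj}}$ is isotropic within $\mathrm{span}(\bH_k)$ conditional on $\bH_k$; hence ${\bf a} = \bU^H {\bf s}_k^{\textrm{proj}}$ is isotropic in $\mathbb{C}^N$ conditional on $\bH_k$ and therefore independent of $\bH_k$. Because the conditional law of ${\bf b} = \bV {\bf a}$ given $\bH_k$ is the unitary image of an isotropic vector, it is again isotropic and does not depend on $(\bV, \boldsymbol{\Sigma})$, so ${\bf b}$ is marginally isotropic and independent of $\bH_k^H \bH_k$.

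With this independence in hand, rotational invariance of the complex Wishart law of $\bH_k^H \bH_k$ allows rotating ${\bf b}$ to ${\bf e}_1$ without changing the joint distribution, giving ${\bf b}^H (\bH_k^H \bH_k)^{-1} {\bf b} \stackrel{d}{=} [(\bH_k^H \bH_k)^{-1}]_{11}$. The lemma then follows from the classical inverse-Wishart identity: for an $M \times N$ matrix $\bH_k$ with i.i.d.\ unit-variance complex Gaussian entries and $M \geq N$, the reciprocal of the $(1,1)$ entry of $(\bH_k^H \bH_k)^{-1}$ is $\chi^2_{2(M-N+1)}$. The main obstacle is the probabilistic decoupling step above: verifying that the tangled chain of randomness (random codebook, minimum-angle selection, projection onto $\mathrm{span}(\bH_k)$) genuinely produces a uniformly random direction ${\bf b}$ that is independent of the Wishart matrix. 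Everything else is algebra and an appeal to a standard random-matrix fact.
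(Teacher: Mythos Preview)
Your proposal is correct and follows essentially the same approach as the paper. Both arguments reduce $\|\bheff_k\|^2$ to $1/\|{\bf u}_k\|^2$, invoke the isotropy of ${\bf s}_k^{\textrm{proj}}$ within $\mathrm{span}(\bH_k)$ (Lemma~\ref{lemma-effective_isotropic}) to replace the random direction by ${\bf e}_1$, and then appeal to the classical fact that $1/[(\bH_k^H\bH_k)^{-1}]_{1,1}$ is $\chi^2_{2(M-N+1)}$; your explicit SVD computation is simply a more detailed rendering of the paper's terse ``change coordinates to any basis for $\mathrm{span}(\bH_k)$'' step.
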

\begin{proof}
Using the notation from Section \ref{sec-description},
the norm of the effective channel is given by:
\begin{eqnarray} \label{eq-effective_norm}
||\bheff_k ||^2 = || \bH_k {\boldsymbol \gamma}_k ||^2 =
|| \bH_k \frac{ {\bf u}_k }{ ||{\bf u}_k|| } ||^2 =
\frac{1}{||{\bf u}_k||^2} || \bH_k {\bf u}_k||^2 =
\frac{|| {\bf s}_k^{\textrm{proj}} ||^2}{|| {\bf u}_k ||^2} = \frac{1}{|| {\bf u}_k ||^2 },
\end{eqnarray}
where we have used the definitions $\bheff_k = \bH_k {\boldsymbol
\gamma}_k$ and ${\boldsymbol \gamma}_k =  \frac{ {\bf u}_k }{ ||{\bf
u}_k|| }$, and the fact that ${\bf u}_k$ satisfies ${\bf H}_k {\bf
u}_k = {\bf s}_k^{\textrm{proj}}$. Therefore, in order to
characterize the norm of the effective channel it is sufficient to
characterize $\frac{1}{|| {\bf u}_k ||^2 }$.  The $N$-dimensional
vector ${\bf u}_k$ is the set of coefficients that allows ${\bf
s}_k^{\textrm{proj}}$, the normalized projection of the chosen
quantization vector, to be expressed as a linear combination of the
columns of $\bH_k$ (i.e., the channel vectors). Because ${\bf
s}_k^{\textrm{proj}}$ is isotropically distributed in span$(\bH_k)$
(Lemma \ref{lemma-effective_isotropic}), if we change coordinates to
any ($N$-dimensional) basis for span$(\bH_k)$ we can assume without
loss of generality that the projection of the quantization vector is
$[1 ~ 0 \cdots 0]^T$.  Therefore, the distribution of $\frac{1}{||
{\bf u}_k ||^2 }$ is the same as the distribution of $\frac{1}{
\left[ \left( {\bf H}_k^H {\bf H}_k \right)^{-1}\right]_{1,1} }$.
Since the $N \times N$ matrix ${\bf H}_k^H {\bf H}_k$ is Wishart
distributed with $M$ degrees of freedom, this quantity is well-known
to be $\chi^2_{2(M-N+1)}$; see \cite{Winters1994} for a proof.
\end{proof}

The norm of the effective channel has
the same distribution as that of a ($M-N+1$)-dimensional random vector
instead of a $M$-dimensional vector. An
arbitrary linear combination (with unit norm) of the $N$ channel
vectors would result in another iid complex Gaussian $M$-dimensional
vector, whose squared norm is $\chi^2_{2M}$, but the weights defining the
effective channel are not arbitrary due to the inverse operation.

\subsection{Sum Rate Performance Relative to Perfect CSIT}

After receiving the quantization indices from each of the mobiles,
a simple transmission option is to perform equal-power ZFBF
based on the channel quantizations (as described in Section \ref{sec-zfbf}).
If $K=M$ or $K>M$ and $M$ users are randomly selected, the resulting SINR at the
$k$-th mobile is given by:
\begin{eqnarray}  \label{eq-sinr_eff}
SINR_k = \frac { \frac{P}{M} | (\bheff_k)^H \bv_k |^2 }
{1 + \sum_{j \ne k}  \frac{P}{M} | (\bheff_k)^H \bv_j |^2 }.
\end{eqnarray}
The ergodic sum rate achieved by QBC, denoted $R_{QBC}(P)$, is
therefore given by:
\begin{eqnarray*}
R_{QBC}(P)
&=&  E_{{\bf H}, \bf{W}} \left[ \log_2 \left(1 +  \frac { \frac{P}{M}
| (\bheff_k)^H \bv_k |^2 } {1 + \sum_{j \ne k}  \frac{P}{M} |
(\bheff_k)^H \bv_j |^2 } \right) \right],
\end{eqnarray*}
where the expectation is taken with respect to the fading
and the random quantization codebooks.

In order to study the benefit of QBC we compare $R_{QBC}(P)$
to the sum rate achieved using zero-forcing beamforming on the basis of
perfect CSIT in an $M$ transmit antenna \textit{vector} downlink channel (single
receive antenna), denoted $R_{ZF-CSIT}(P)$. We use the vector
downlink with perfect CSIT as the benchmark because
QBC converts the system into a vector downlink, and
the rates achieved by QBC cannot exceed $R_{ZF-CSIT}(P)$
(even as $B \rightarrow \infty$).  We later describe how this
metric can easily be translated into a comparison between $R_{QBC}(P)$
and the sum rate achievable with linear precoding (i.e., block diagonalization)
in an $N$ receive antenna MIMO downlink channel with CSIT.

In a vector downlink with perfect CSIT, the BF vectors (denoted ${\bf v}_{ZF,k}$) can be
chosen perfectly orthogonal to all other channels. Thus, the SNR of each user
is as given in (\ref{eq-downlink_sinr}) with zero interference terms
in the denominator and the resulting average rate is:
\begin{eqnarray*}
R_{ZF-CSIT}(P) =  E_{{\bf H}} \left[ \log_2 \left(1 + \frac{P}{M} |{\bf h}_k^H
{\bf v}_{ZF,k}|^2 \right) \right].
\end{eqnarray*}
Following the procedure in \cite{Jindal_Finite_BC_Journal},
the rate gap $\Delta R(P)$ is defined as the difference
between the per-user throughput achieved with perfect CSIT and with feedback-based QBC:
\begin{eqnarray} \label{eq-rate_offset_defn}
\Delta R(P) \triangleq R_{ZF-CSIT}(P) - R_{QBC}(P).
\end{eqnarray}
Similar to Theorem 1 of \cite{Jindal_Finite_BC_Journal}, we can upper bound
this throughput loss:
\begin{theorem} \label{thm-gap}
The per-user throughput loss is upper bounded by:
\begin{eqnarray*}
\Delta R(P) &\leq& \left( \sum_{l=M-N+1}^{M-1} \frac{1}{l}
\right)  \log_2 e + \log_2 \left( 1\! +\! P\! \left( \frac{M\!-N\!+1}{M}
\right) \! E[\sin^2 ( \angle( \hat{\bh_k}, \bheff_k ))]
 \right)
\end{eqnarray*}
\end{theorem}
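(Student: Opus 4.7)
The plan is to follow the template of Theorem~1 of~\cite{Jindal_Finite_BC_Journal}, splitting $\Delta R(P)$ into a signal-power gap and an interference piece, but with an extra step to account for the fact that, unlike in the $N{=}1$ vector downlink, the effective channel under QBC has a strictly smaller ``effective dimension'' ($M{-}N{+}1$ versus $M$) than the CSIT channel. Starting from the SINR identity $\log_2(1+S/(1+I)) = \log_2(1+S+I) - \log_2(1+I) \ge \log_2(1+S) - \log_2(1+I)$, I would lower bound
\[
R_{QBC}(P) \ge E\!\left[\log_2\!\left(1 + \tfrac{P}{M}|(\bheff_k)^H\bv_k|^2\right)\right] - E[\log_2(1+I)],
\]
where $I := \sum_{j \ne k}\tfrac{P}{M}|(\bheff_k)^H\bv_j|^2$. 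This yields $\Delta R(P) \le T_1 + T_2$ with $T_1 := R_{ZF-CSIT}(P) - E[\log_2(1+\tfrac{P}{M}|(\bheff_k)^H\bv_k|^2)]$ and $T_2 := E[\log_2(1+I)]$.

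For $T_2$, I would use the standard quantization decomposition $\bheff_k/\|\bheff_k\| = \cos\theta\,\hat{\bh}_k + \sin\theta\,{\bf s}$, where $\sin^2\theta = \sin^2(\angle(\hat{\bh}_k,\bheff_k))$ and, by rotational symmetry of the isotropic $\bheff_k/\|\bheff_k\|$ around $\hat{\bh}_k$, ${\bf s}$ is a unit vector isotropic in $\hat{\bh}_k^\perp$ and independent of $\theta$. Since $\hat{\bh}_k^H\bv_j = 0$ for $j\ne k$, this gives $|(\bheff_k)^H\bv_j|^2 = \|\bheff_k\|^2\sin^2\theta\,|{\bf s}^H\bv_j|^2$. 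Each $\bv_j$ ($j \ne k$) is determined by the other mobiles' data, so by rotational symmetry around $\hat{\bh}_k$ it is isotropic within $\hat{\bh}_k^\perp$, yielding $E[|{\bf s}^H\bv_j|^2] = 1/(M-1)$ and $E[\sum_{j \ne k}|{\bf s}^H\bv_j|^2] = 1$. Combined with Lemma~\ref{lemma-norm} ($E[\|\bheff_k\|^2] = M-N+1$, and $\|\bheff_k\|^2$ independent of the direction), this gives $E[I] = P\cdot\tfrac{M-N+1}{M}\cdot E[\sin^2(\angle(\hat{\bh}_k, \bheff_k))]$, and Jensen's inequality applied to the concave $\log_2(1+\cdot)$ delivers the $T_2$ bound.

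For $T_1$, I would factor the two signal powers into independent norm-times-direction products: $|\bh_k^H\bv_{ZF,k}|^2 = A_1 B_1$ with $A_1 = \|\bh_k\|^2 \sim \chi^2_{2M}$ and $B_1 = |(\bh_k/\|\bh_k\|)^H\bv_{ZF,k}|^2 \sim \mathrm{Beta}(1,M{-}1)$; and $|(\bheff_k)^H\bv_k|^2 = A_2 B_2$ with $A_2 = \|\bheff_k\|^2 \sim \chi^2_{2(M-N+1)}$ (Lemma~\ref{lemma-norm}) and $B_2 \sim \mathrm{Beta}(1,M{-}1)$ (Lemma~\ref{lemma-effective_isotropic} together with independence of $\bv_k$ from $\bheff_k$). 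I would then couple the two experiments by writing $A_1 = A_2 + C$ with $C \sim \chi^2_{2(N-1)}$ independent of $A_2$, and by taking $B_1 = B_2$; this makes $A_1 B_1 \ge A_2 B_2$ almost surely, so $A_1 B_1$ stochastically dominates $A_2 B_2$. Since $s \mapsto s/(1+as)$ is increasing in $s$ for every $a \ge 0$, the dominance implies
\[
\tfrac{d}{da}\!\left(E[\log_2(1+aA_1B_1)] - E[\log_2(1+aA_2B_2)]\right) = (\log_2 e)\bigl(E[\tfrac{A_1B_1}{1+aA_1B_1}] - E[\tfrac{A_2B_2}{1+aA_2B_2}]\bigr) \ge 0,
\]
so $T_1 = g(P/M)$ with $g$ nondecreasing. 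Letting $P \to \infty$, the $\log_2(P/M)$ terms and the $E[\log_2 B_i]$ contributions cancel, and $E[\ln \chi^2_{2k}] = \psi(k)$ with $\psi(M) - \psi(M{-}N{+}1) = \sum_{l=M-N+1}^{M-1}\tfrac{1}{l}$ give $T_1 \le (\log_2 e)\sum_{l=M-N+1}^{M-1}\tfrac{1}{l}$. Summing $T_1$ and $T_2$ produces the claimed bound.

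The hardest part will be the signal piece: one must verify the $S = AB$ factorization with independent factors (which relies on the rotational-symmetry arguments underlying Lemmas~\ref{lemma-effective_isotropic} and~\ref{lemma-norm}), construct the coupling that yields stochastic dominance, and then invoke the monotonicity-in-SNR trick to elevate a clean high-SNR limit (a digamma-function difference) into a bound valid at every $P$. By contrast, once the $\cos\theta\,\hat{\bh}_k + \sin\theta\,{\bf s}$ decomposition is in hand the interference bound is just a direct moment computation.
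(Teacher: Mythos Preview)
Your proof is correct and follows essentially the same decomposition as the paper: both split $\Delta R(P)$ into a signal-gap term and an interference term, bound the interference via Jensen and the $\cos\theta\,\hat{\bh}_k + \sin\theta\,{\bf s}$ decomposition (the paper invokes this as Lemma~2 of~\cite{Jindal_Finite_BC_Journal}), and couple the CSIT and QBC signal powers through the same chi-squared/Beta relationship. The only notable difference is in how the signal-gap bound is extracted: the paper writes $\|\bheff_k\|^2$ as $X_\beta \|\bh_k\|^2$ in distribution with $X_\beta \sim \mathrm{Beta}(M{-}N{+}1,\,N{-}1)$ and then applies the pointwise inequality $(1+x)/(1+cx)\le 1/c$ for $0<c\le 1$ to obtain $T_1\le -E[\log_2 X_\beta]$ in one stroke, whereas you build the same coupling additively ($A_1=A_2+C$, so that $A_2/A_1$ is exactly the paper's $X_\beta$) and then use stochastic dominance plus monotonicity in $P$ to pass to the high-SNR limit. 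Both routes land on the identical digamma-difference $\sum_{l=M-N+1}^{M-1}1/l$; the paper's is simply one step shorter.
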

\begin{proof}
See Appendix.
\end{proof}
The first term in the expression is the throughput loss due to the
reduced norm (Lemma  \ref{lemma-norm}) of the effective channel,
while the second (more significant) term,
which is an increasing function of $P$, is due to quantization error.
In order to quantify this rate gap, the expected quantization error
needs to be bounded.    By Lemma
\ref{lemma-quant_error}, the quantization error is the minimum of
$2^B$ iid beta$(M-N, N)$ RV's.  Furthermore, a general
result on ordered statistics applied to beta RV's gives
\cite[Chapter 4.I.B]{Beta}:
\begin{eqnarray*}
E[\sin^2 ( \angle( \hat{\bh_k}, \bheff_k ))] \leq F^{-1}_X \left(
2^{-B} \right)
\end{eqnarray*}
where $F_X(x)$ is the inverse of the CDF of a beta $(M-N, N)$ random
variable, which is:
\begin{eqnarray*}
F_X(x) &=& \sum_{i=0}^{N-1} {M \! - \! 1 \choose N \! - \! 1 \! - \!
i} x^{M-N+i} (1-x)^{N-1+i} \approx {M \! - \! 1 \choose N \! - \! 1}
x^{M-N},
\end{eqnarray*}
where the approximation is the result of keeping only the lowest order $x$ term and dropping
$(1-x)$ terms; this is valid for small values of $x$.
Using this we get the
following approximation:
\begin{eqnarray} \label{eq-approx_quant}
E[\sin^2 ( \angle( \hat{\bh_k}, \bheff_k ))] \approx
2^{-\frac{B}{M-N}} {M \! - \! 1 \choose N \! - \!
1}^{-\frac{1}{M-N}}.
\end{eqnarray}
The accuracy of this approximation is later verified by our
numerical results. Plugging this approximation into the upper bound
in Theorem \ref{thm-gap} we get:
\begin{eqnarray} \label{eq-gap_approx}
\Delta R (P) &\approx&   \left(\sum_{l=M-N+1}^{M-1} \frac{1}{l}
\right)\log_2 e + \log_2 \left( 1 \! + \! P \cdot \left( \frac{M \! - \! N \!
+ \! 1}{M} \right)  2^{-\frac{B}{M-N}} {M\!-\!1 \choose
N\!-\!1}^{-\frac{1}{M-N}} \right)
\end{eqnarray}

If $B$ is fixed, quantization error causes the system to become interference-limited
as the SNR is increased (see \cite[Theorem
2]{Jindal_Finite_BC_Journal} for a formal proof when $N=1$).
However, if $B$ is scaled with the SNR $P$
such that the quantization error decreases as $\frac{1}{P}$, the rate
gap in (\ref{eq-gap_approx}) can be kept constant and
the full multiplexing gain ($M$) is achieved. In order to determine this scaling, we set the
approximation of $\Delta R(P)$ in (\ref{eq-gap_approx}) equal to
a rate constant $\log_2 b$ and solve for $B$ as a function of $P$.  Thus,
a per-mobile rate loss of at most $\log_2 b$ (relative to $R_{ZF-CSIT}(P)$)
is maintained if $B$ is scaled as:
\begin{eqnarray}
B_N  &\approx& (M-N) \log_2 P - (M-N) \log_2 c
 - (M-N) \log_2 \left(\frac{M}{M \! - \! N \! + \! 1} \right) - \log_2  {M \! - \! 1 \choose N \! - \! 1} ,
\nonumber \\
 &\approx& \frac{M-N}{3} P_{dB} - (M-N) \log_2 c
 - (M-N) \log_2 \left(\frac{M}{M \! - \! N \! + \! 1} \right) - \log_2  {M \! - \! 1 \choose N \! - \! 1},
\label{eq-scaling_N}
\end{eqnarray}
where $c = b \cdot e^{-(\sum_{l=M-N+1}^{M-1} \frac{1}{l})} - 1$.
Note that a per user rate gap of $\log_2 b =1$ bps/Hz is
equivalent to a 3 dB power gap in the sum rate curves.

As discussed in
Section \ref{sec-singleant}, scaling feedback in a single receive
antenna downlink as $B_1 = \frac{M-1}{3} P_{dB}$
maintains a 3 dB gap from perfect CSIT throughput.
Feedback must also be increased linearly if QBC is used,
but the slope of this increase is $\frac{M-1}{3}$ when mobiles
have only a single antenna compared to a slope of
$\frac{M-N}{3}$ for antenna combining.
If we compute the difference between the $N=1$ feedback
load and the QBC feedback load, we can quantify
\textit{how much less feedback is required to achieve the same throughput
(3 dB away from a vector downlink channel with perfect CSIT)
if QBC is used with $N$ antennas/mobile}:
\begin{eqnarray*}
\Delta_{QBC}(N) = B_1 - B_N
\approx \frac{N-1}{3} P_{dB}  +\log_2  {M \! - \! 1 \choose N \! - \! 1}
- (N-1) \log_2 e.
\end{eqnarray*}

The sum rate of a 6 transmit antenna downlink channel ($M=6$)
is plotted in Fig. \ref{fig-sum_rate}.  The perfect CSIT
zero-forcing curve is plotted along with the rates
achieved using finite rate feedback with $B$
scaled according to (\ref{eq-scaling_N}) for $N=1,2$ and $3$.
For $N=2$ and $N=3$ QBC is performed and the fact that the throughput
curves are approximately 3 dB away from the perfect
CSIT curve verify the accuracy of the approximations
used to derive the feedback scaling expression in
(\ref{eq-scaling_N}).  In this system,
the feedback savings at 20 dB are 7 and 12 bits, respectively,
for $2$ and $3$ receive antennas.  All numerical results
in the paper are generated using the method described in
Appendix \ref{sec-numerical_method}.

It is also important to compare QBC throughput to
the throughput of a MIMO downlink channel with $N$ antennas per mobile.
The most meaningful comparison is
to the rate achievable with block diagonalization (BD) \cite{Spencer2004}
without user selection and with equal power loading.  In this case, $\frac{M}{N}$ mobiles are
transmitted to (with $N$ data streams per mobile).  In
\cite{LeeJindal07IT} it is shown that the BD sum rate is
\begin{eqnarray*}
\Delta_{BD-ZF}(N) = (\log_2 e) \frac{M}{N} \sum_{j=1}^{N-1} \frac{N-j}{j}
\end{eqnarray*}
larger than $R_{ZF-CSIT}(P)$ at asymptotically high SNR, and that
this offset is accurate even for moderate SNR's.  This can
be translated to a power offset by multiplying by $\frac{3}{M}$
to give $\frac{3 \log_2 e}{N} \sum_{j=1}^{N-1} \frac{N-j}{j}$ dB,
which equates to 2.16 dB and 3.61 dB for $N=2$ and $N=3$.
Therefore, the rate offset between QBC and BD with CSIT is the
sum of $\Delta R(P)$ (equation \ref{eq-rate_offset_defn}) and
$\Delta_{BD-ZF}(N)$.
In Fig. \ref{fig-sum_rate} the BD sum rate curves
are plotted, and their shifts relative to ZF-CSIT are seen to
follow the predicted power gaps.


\begin{figure}
    \centering
    \epsfig{file=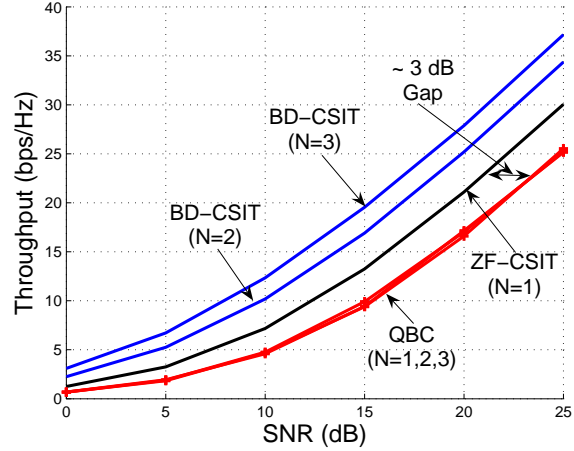, width=3.35in}
\caption{Sum rate of $M=K=6$ downlink channel}
   \label{fig-sum_rate}
\end{figure}

\subsection{Effect of Receiver Estimation Error} \label{sec-rx_error}

Although the analysis until now has assumed perfect CSI at the
mobiles, a practical system always has some level of receiver error.
We consider the scenario where a shared pilot sequence is used to
train the mobiles.  If $\beta M$ downlink pilots are used ($\beta
\geq 1$ pilots per transmit antenna), channel estimation at the
$k$-th mobile is performed on the basis of observation $\bG_k =
\sqrt{\beta P} \bH_k + {\bf n}_k$. The MMSE estimate of $\bH_k$ is
$\hat{\bG_k} = \frac{ \sqrt{\beta P} }{1 + \beta P} \bG_k$, and the
true channel matrix can be written as the sum of the MMSE estimate
and independent estimation error:
\begin{eqnarray} \label{eq-HRX}
\bH_k = \hat{\bG_k} + {\bf e}_k,
\end{eqnarray}
where ${\bf e}_k$ is white Gaussian noise, independent of the
estimate $\hat{\bG_k}$, with per-component variance $(1 + \beta
P)^{-1}$. After computing the channel estimate $\hat{\bG_k}$, the
mobile performs QBC on the basis of the estimate $\hat{\bG_k}$ to
determine the combining vector ${\boldsymbol \gamma}_k$. As a
result, the quantization vector $\hat{\bh_k}$ very accurately
quantizes the vector $\hat{\bG_k}{\boldsymbol{\gamma}}_k$, which is
the mobile's estimate of the effective channel output, while the
actual effective channel is given by $\bheff_k = \bH_k
{\boldsymbol{\gamma}}_k$.

For simplicity we assume that coherent communication is possible,
and therefore the long-term average throughput is again $E[\log_2(1
+ SINR_k)]$ where the same expression for SINR given in
(\ref{eq-sinr_eff}) applies\footnote{We have effectively assumed
that each mobile can estimate the phase and SINR at the effective
channel output. In practice this could be accomplished via a second
round of pilots as described in \cite{Caire_Jindal_ISIT07}.}. The
general throughput analysis in Section \ref{sec-tput} still applies,
and in particular, the rate gap upper bound given in Theorem
\ref{thm-gap} still holds if the expected quantization error takes
into account the effect of receiver noise.  As shown in Appendix
\ref{sec-rategap_rx_error}, the approximate rate loss with receiver
error is:
\begin{eqnarray} \label{eq-gap_approx2}
\Delta R (P) &\approx&  \log_2 e \left(\sum_{l=M-N+1}^{M-1} \frac{1}{l}
\right) + \log_2 \left( 1 \! + \! P \cdot \left( \frac{M \! - \! N \!
+ \! 1}{M} \right)  2^{-\frac{B}{M-N}} {M\!-\!1 \choose
N\!-\!1}^{-\frac{1}{M-N}} + \frac{1}{\beta} \right).
\end{eqnarray}
Comparing this expression to (\ref{eq-gap_approx}) we see that
estimation error leads only to the introduction of an additional
$\frac{1}{\beta}$ term. If feedback is scaled according to
(\ref{eq-scaling_N}) the rate loss is $\log_2(b + \beta^{-1})$
rather than $\log_2 (b)$. In Figure \ref{fig-comb_RXerror} the
throughput of a 4 mobile system with $M=4$ and $N=2$ is plotted for
perfect CSIT/CSIR and for QBC performed on the basis of perfect CSIR
($\beta = \infty$) and imperfect CSIR for $\beta=1$ and $\beta=2$.
Estimation error causes non-negligible degradation, but the loss
decreases rather quickly with $\beta$ (which can be increased at a
reasonable resource cost because pilots are shared).


\begin{figure}
    \centering
    \epsfig{file=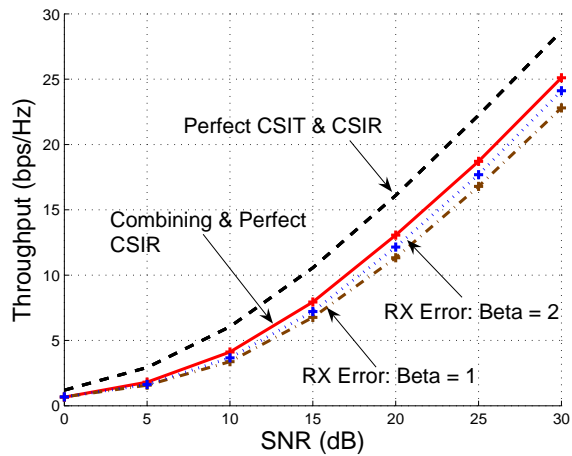, width=3.35in}
\caption{Combining with Imperfect CSIR: $M=4$, $N=2$, $K=4$, $B$
scaled with SNR}
   \label{fig-comb_RXerror}
\end{figure}

\section{Performance Comparisons} \label{sec-compare}

In this section we compare the throughput of QBC to other receive
combining techniques and to limited feedback-based block
diagonalization\footnote{ It should be noted that comparisons with
block diagonalization are somewhat rough because systems that
perform BD on the basis of limited feedback and that employ
user/stream selection have not yet been extensively studied in the
literature, to the best of our knowledge. As a result, it may be
possible to improve upon the BD systems we use here as the point of
comparison.}.  For all results on receiving combining, the user
selection algorithm of \cite{DS05} is applied assuming limited
feedback ($B$ bits) regarding the direction of the effective channel
and perfect knowledge of the effective channel
norm\footnote{Although the rate gap upper bound derived in Theorem
\ref{thm-gap} only rigorously applies to systems with equal power
loading and random selection of $M$ mobiles, the bound can be used
to reasonably approximate the throughput degradation due to limited
feedback even when user selection is performed.  See
\cite{Yoo_Jindal_Goldsmith07} for a further discussion of the effect
of limited feedback on systems employing user selection.}. We first
describe these alternative approaches and then discuss some
numerical results.

\subsection{Alternate Combining Techniques}

The optimal receive combining technique for a point-to-point MIMO channel
in a limited feedback setting is to select the quantization vector that maximizes
received power \cite{Love_Heath}:
\begin{eqnarray} \label{eq-quant_mrc}
\hat{\bh}_k &=& \textrm {arg} \max_{\bw=\bw_1,\ldots,\bw_{2^{B}}}
|| \bH_k^H \bw ||^2. \end{eqnarray}
Because this method roughly corresponds to maximum ratio combining, it is referred to as MRC.
If BF vector $\bw$ is used by the transmitter, received power is maximized by choosing
$\boldsymbol{\gamma} = \frac{ \bH_k^H \bw }{|| \bH_k^H \bw ||}$
\cite{Love_Heath}, which yields
$\bheff_k = \bH_k \boldsymbol{\gamma}_k = \frac{ \bH_k \bH_k^H \bw_k}{ || \bH_k^H \bw_k||}$.
When $B$ is not very small, with high
probability the quantization vector that maximizes $|| \bH_k^H \bw ||^2$
is the vector that is closest to the eigenvector corresponding
to the maximum eigenvalue of $\bH_k \bH_k^H$.
To see this, consider the maximization of $|| \bH_k^H \bw ||$ when
$\bw$ is constrained to have unit norm but need not be selected
from a finite codebook.  This corresponds to the classical definition
of the matrix norm, and the optimizing $\bw$ is
in the direction of the maximum singular value of $\bH_k$. When
$B$ is not too small, the quantization error is very small and as a
result the solution to (\ref{eq-quant_mrc}) is extremely close
to $||\bH_k||^2$.
As a result, \textit{selecting the quantization vector according to the
criteria in (\ref{eq-quant_mrc}) is roughly equivalent to
directly finding the quantization vector that is closest to the
direction of the maximum singular value of $\bH_k$.}

The maximum singular value of $\bH_k$ can be directly quantized if
the mobile first selects the combiner weights
$\boldsymbol{\gamma}_k$ such that the effective channel
$\bheff_k = \bH_k \boldsymbol{\gamma}_k$ is in the direction
of the maximum singular value, which corresponds to selecting
$\boldsymbol{\gamma}_k$ equal to the eigenvector corresponding
to the maximum eigenvalue of the $N \times N$ matrix $\bH_k^H \bH_k$,
and then finds the quantization vector closest to $\bheff_k$.
The effective channel norm satisfies
$||\bheff_k||^2 = ||\bH_k||^2$, which can be reasonably
approximated as a scaled version of a $\chi^2_{2MN}$ random
variable \cite{Paulraj_Gore_Nabar}.  Therefore the norm of the effective
channel is large, but notice that the quantization procedure reduces
to standard vector quantization, for which the error is roughly
$2^{-\frac{B}{M-1}}$.

In Figure \ref{fig-comb_compare}, numerically computed values
of the quantization error
($\log_2 (E[\sin^2 ( \angle(\bheff_k, \hat{\bh_k}))]$)
are shown for QBC, antenna
selection, MRC (corresponding to equation
\ref{eq-quant_mrc}), and direct quantization of the maximum
eigenvector, along with approximation $2^{-\frac{B}{M-1}}$ as well
as the approximation from (\ref{eq-approx_quant}), for a $M=4$, $N=2$ channel.
Note that the error of QBC is very well
approximated by (\ref{eq-approx_quant}), and the exponential rate
of decrease of the other techniques are all well approximated
by $2^{-\frac{B}{M-1}}$.

\begin{figure}
    \centering
    \epsfig{file=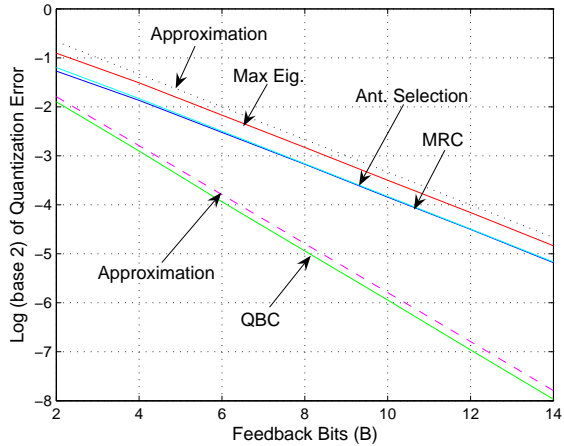, width=3.35in}
\caption{Quantization Error for Different Combining Techniques
($M=4$, $N=2$)}
   \label{fig-comb_compare}
\end{figure}

Each combining technique transforms the MIMO downlink into a vector
downlink with a modified channel norm and quantization error. These
techniques are summarized in Table \ref{table-combining}. The key
point is that only QBC changes the exponent of the quantization
error\footnote{An improvement over QBC is to choose the quantization
vector and combining weights that maximize the expected received
SINR (the true SINR depends on the BF vectors, which are unknown to
the mobile). This extension of QBC, which will surely outperform QBC
and MRC, has been under investigation by other researchers since the
initial submission of this manuscript and the results will be
published shortly \cite{TrivHuangBoc07_Asilomar}.}, which determines
the rate at which feedback increases with SNR. When comparing these
techniques note that the complexity of QBC and MRC are essentially
the same: QBC and MRC require computation of $|| {\bf Q}_k^H \bw
||^2$ and $|| \bH_k^H \bw ||^2$, respectively.

\begin{table}
\begin{center}
\begin{tabular}{l|l|l}
& Effective Channel Norm & Quantization Error \\ \hline Single RX
Antenna ($N=1$) & $\chi^2_{2M}$ & $2^{-B/(M-1)}$ \\ \hline Antenna
Selection  & $\chi^2_{2M}$ & $2^{-(B+\log_2 N)/(M-1)}$ \\ \hline MRC
& $\approx$ max eigenvalue & $2^{-B/(M-1)}$ \\ \hline Max
Eigenvector  & max eigenvalue & $2^{-B/(M-1)}$ \\ \hline QBC  &
$\chi^2_{2(M-N+1)}$ & $2^{-B/(M-N)}$ \\ \hline
\end{tabular}
\caption{Summary of Combining Techniques} \label{table-combining}
\end{center}
\end{table}

\subsection{Block Diagonalization} \label{sec-bd}

An alternative manner in which multiple receive antennas can be used is to
extend the linear precoding structure of ZFBF to allow for
transmission of multiple data streams to each mobile.
Block diagonalization (BD) selects precoding matrices such multi-user interference
is eliminated at each receiver, similar to ZFBF. In order to select appropriate
precoding matrices, the transmitter must know
the $N$-dimensional subspace spanned by each mobile channel ${\bf H}_k$.
Thus an appropriate feedback strategy is to have each
mobile quantize and feedback its channel subspace.  The effect of
limited feedback in this setting (assuming there are $\frac{M}{N}$ mobiles
and equal power loading across users and streams is performed)
was studied in \cite{RavindranJindal07_ICASSP}.  In order to
achieve a bounded rate loss relative to a perfect CSIT (BD) system, feedback
(per mobile) needs to scale approximately as $N(M-N) \log_2 P$.  Thus,
the aggregate feedback load summed over $\frac{M}{N}$ mobiles is
approximately $M (M-N) \log_2 P$, which is (approximately) the same as
the aggregate feedback in a QBC system in which each of the $M$ mobiles
uses $B \approx (M-N) \log_2 P$.  Thus, there is a rough equivalence between
QBC and BD in terms of feedback scaling, and this is later confirmed by
our numerical results.

It is also possible to perform user and stream selection when BD is
used, and \cite{BoccardiHuang07_ICASSP} presents an extension of the
algorithm of \cite{DS05} to the multiple receive antenna setting
(referred to as maximum eigenmode transmission, or MET).  In
essence, MET treats each mobile's $N$ eigenmodes as a different
single antenna receiver and selects eigenmodes in a greedy fashion
using the approach of \cite{DS05}.  Thus, in a limited feedback
setting a reasonable strategy is to have each user separately
quantize the directions of its $N$ eigenvectors and also feed back
the corresponding eigenvalues.

\subsection{Numerical Results} \label{sec-numerical}

In Figures \ref{fig-comb_scaledB} and  \ref{fig-comb_fixedB}
throughput curves are shown for a 4 transmit antenna, 2 receive antenna ($M=4$, $N=2$)
system with $K=4$ mobiles.  Sum rate is plotted for three different combining techniques
(QBC, antenna selection, and MRC) and for a vector downlink channel ($N=1$);
the BD curves are discussed in later paragraphs.  In  Fig.
\ref{fig-comb_scaledB}, $B$ (per mobile) is scaled according to (\ref{eq-scaling_N}),
i.e., roughly as $(M-N) \log_2 P$, while in Fig.  \ref{fig-comb_fixedB}
each mobile uses 10 bits of feedback.
As expected, the throughput of antenna selection, MRC, and the single antenna
system all lag behind QBC in Fig. \ref{fig-comb_scaledB}, particularly
at high SNR.  This is because the $(M-N) \log_2 P$ scaling of
feedback is simply not sufficient to maintain good performance
if these techniques are used.  To be more precise, the quantization
error goes to zero slower than $\frac{1}{P}$ which corresponds to interference
power that increases with SNR, and thus a reduction in the slope
(i.e., multiplexing gain) of these curves.
In Fig. \ref{fig-comb_fixedB}, MRC outperforms QBC for SNR less than approximately
12 dB because signal power is more important than quantization error (i.e., interference
power), i.e., the system is not yet interference-limited.  However, at higher SNR's
QBC outperforms MRC because of the increased importance of
quantization error.


\begin{figure}
    \centering
    \epsfig{file=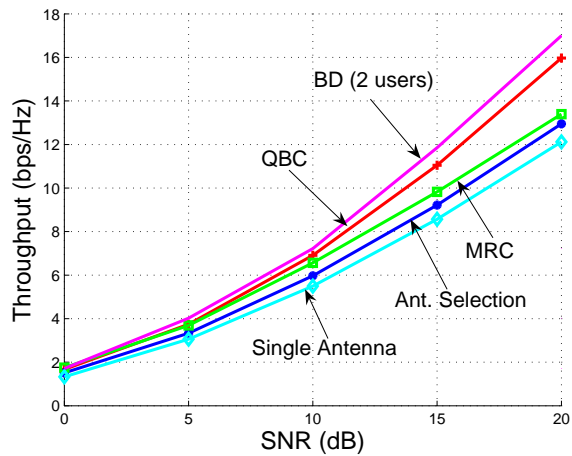, width=3.35in}
\caption{Different Combining Techniques: $M=4$, $N=2$, $K=4$, $B$
scaled with SNR}
   \label{fig-comb_scaledB}
\end{figure}

\begin{figure}
    \centering
    \epsfig{file=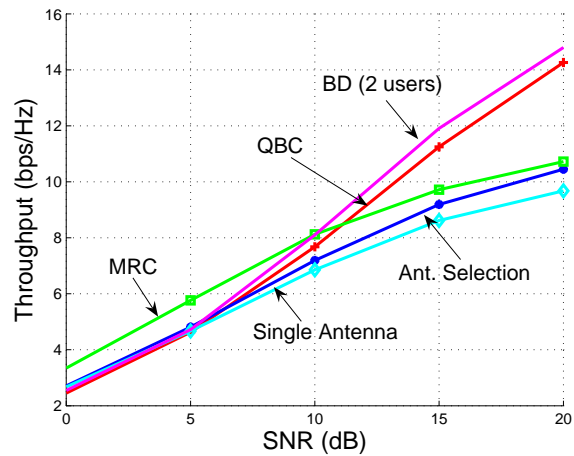, width=3.35in}
\caption{Different Combining Techniques: $M=4$, $N=2$, $K=4$,
$B=10$}
   \label{fig-comb_fixedB}
\end{figure}

Figures \ref{fig-comb_scaledB} and  \ref{fig-comb_fixedB} also include
plots of the throughput of a BD system. In this system, 2 of the 4 users
are randomly selected to feedback subspace information, and equal power BD with no selection
is used to send 2 streams to each of these mobiles, for a total of 4 streams.
In order to equalize the aggregate feedback load, each of the 2 users is allocated double the
feedback budget of the combining-based systems; this corresponds to using
two times the scaling of (\ref{eq-scaling_N}) in Fig. \ref{fig-comb_scaledB}
and 20 bits per mobile in Fig.  \ref{fig-comb_fixedB}.  BD performs
slightly better than QBC in both figures, but we later see that this
advantage is lost for larger $K$.

Figures \ref{fig-select_M4_B10} displays throughput for a 4 transmit antenna,
2 receive antenna ($M=4$, $N=2$) system at 10 dB against $K$, the number of mobiles.
Capacity refers to the sum capacity of the system (with CSIT), MET-CSIT is the
throughput achieved using the MET algorithm  on the basis of
CSIT\cite{BoccardiHuang07_ICASSP}, and ZF-CSIT is the throughput of a vector downlink
with CSIT and user selection \cite{DS05}.  Below these are four limited feedback
curves for 10 bits of feedback per mobile.  The first three, QBC, MRC, and antenna
selection, correspond to different combining techniques, while MET-FB corresponds
to performing MET on the basis of 5 bit quantization of each eigenmode
(10 bits total feedback per mobile).  QBC achieves significantly
higher throughput than MRC or antenna selection, particularly for larger
values of $K$.  The ZF-CSIT curve is shown because it serves as an upper bound on
the performance of QBC, and the gap between the two
is quite reasonable even for $B=10$.
MET-FB is seen to perform extremely poorly:  this is not too surprising because
the MET algorithm is likely to only choose the strongest eigenmode of a few users
\cite{BoccardiHuang07_ICASSP}, and thus half of the feedback is essentially
wasted on quantization of each user's weakest eigenmode.  This motivates
dedicating all 10 bits to quantization of the strongest eigenmode, but note that this
essentially corresponds to MRC, which is outperformed by QBC.  The huge gap between MET-CSIT and
MET-FB indicates that MET has the potential to provide excellent performance,
but extremely high levels of feedback may be necessary to realize MET's potential.

Finally, Figure \ref{fig-select_M6_N12} shows throughput versus number of users
$K$ for a 6 transmit antenna ($M=6$) channel with either 1 or 2 receive
antennas.  Sum capacity for $N=1$ and $N=2$ is plotted, along with the
sum rate of a perfect-CSIT TDMA system in which only the receiver with the largest
point-to-point capacity is selected for transmission.  The ZF and QBC curves correspond
to systems with user selection and either single receive antennas or quantization-based
combining, respectively, for feedback levels of 10, 15, and 20 bits per mobile.
For each feedback level, an additional receive antenna with QBC provides a significant
throughput gain relative to a single receive antenna system.  Furthermore, QBC
significantly outperforms TDMA ($N=2$) for $B=15$ or $B=20$, and provides an advantage
over TDMA for $B=10$ when the number of users is sufficiently large.
Note, however, that there is a significant gap between QBC and $N=2$ capacity
even when 20 bits of feedback are used; this indicates that there may be room for
significant improvement beyond QBC.


\begin{figure}
    \centering
    \epsfig{file=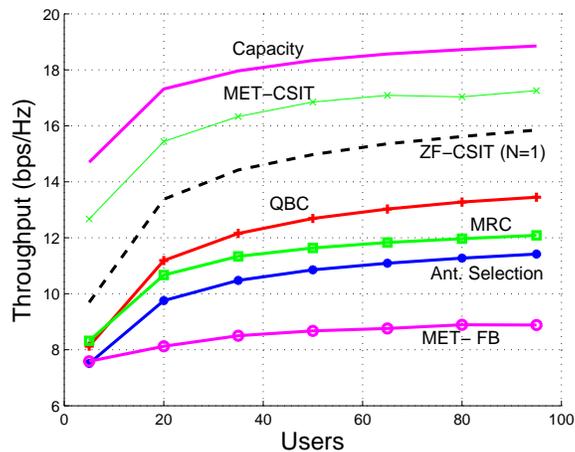, width=3.35in}
\caption{Combining and User Selection: $M=4$, $N=2$, $B=10$}
   \label{fig-select_M4_B10}
\end{figure}

\begin{figure}
    \centering
    \epsfig{file=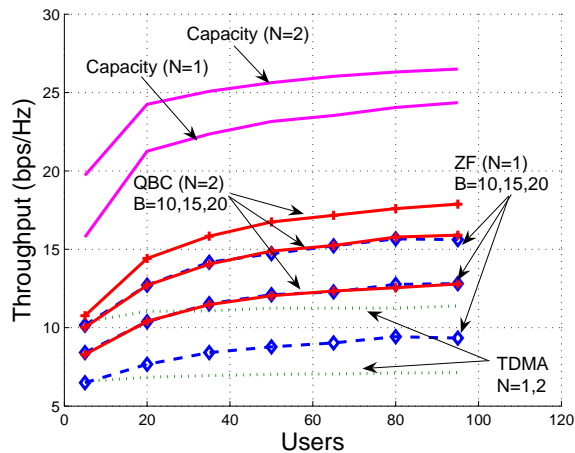, width=3.35in}
\caption{Combining and User Selection: $M=6$, $N=1,2$}
   \label{fig-select_M6_N12}
\end{figure}

\section{Conclusion} \label{sec-conclusion}

The performance of multi-user MIMO techniques such as
zero-forcing beamforming critically depend on the accuracy of the channel
state information provided to the transmitter.  In this paper,
we have shown that receive antenna combining can be used to
reduce channel quantization error in limited feedback MIMO downlink channels,
and thus significantly reduce channel feedback requirements.
Unlike traditional maximum-ratio combining techniques that
maximize received signal power, the proposed quantization-based
combining technique minimizes quantization error, which translates
into minimization of multi-user interference power.

Antenna combining is just one method by which
multiple receive antennas can be used in the MIMO downlink.  It
is also possible to transmit multiple streams to each mobile, or to
use receive antennas for interference cancellation if the structure
of the transmitted signal is known to the mobile.  It remains to
be seen which of these techniques is most beneficial in practical
wireless systems when channel feedback resources and complexity
requirements are carefully accounted for.

\appendices

\section{Proof of Theorem \ref{thm-gap}}
Plugging the rate expressions into the definition of $\Delta(P)$, we
have $\Delta(P) = \Delta_a + \Delta_b$ where
\begin{eqnarray*}
\Delta_a &=& E_{{\bf H}} \left[ \log_2 \left(1 + \rho |{\bf
h}_k^H {\bf v}_{ZF,k}|^2 \right) \right] - E_{{\bf H}, \bf{W}}
\left[ \log_2
\left(1 + \sum_{j=1}^M \rho | (\bheff_k)^H \bv_j |^2 \right) \right] \\
\Delta_b &=& E_{{\bf H}, \bf{W}} \left[ \log_2 \left(1 + \sum_{j \ne
k} \rho | (\bheff_k)^H \bv_j |^2 \right) \right],
\end{eqnarray*}
where $\rho \triangleq \frac{P}{M}$.
To upper bound $\Delta_a$, we define normalized vectors
$\tilde{\bh_k} = \bh_k / || \bh_k ||$ and $\tilde{\bheff_k} =
\bheff_k / || \bheff_k ||$, and note that the norm and directions of
$\bh_k$ and of $\bheff_k$ are independent. Using this we have:
\begin{eqnarray}
E_{{\bf H}, \bf{W}} \left[ \log_2 \left(1 + \sum_{j=1}^M \rho |
(\bheff_k)^H \bv_j |^2 \right) \right] &\geq&  E_{{\bf H}, \bf{W}}
\left[ \log_2 \left(1 + \rho | (\bheff_k)^H \bv_k |^2 \right)
\right] \nonumber
\\ &=& E_{{\bf H}, \bf{W}} \left[ \log_2 \left(1 +
\rho || \bheff_k ||^2 |
\tilde{\bheff_k}^H \bv_k |^2 \right) \right] \nonumber \\
&=& E_{{\bf H}} \left[ \log_2 \left(1 + \rho X_{\beta}
||\bh_k||^2 | \tilde{\bh_k}^H {\bf v}_{ZF,k}|^2 \right) \right],
\label{eq-a1}
\end{eqnarray}
where $X_{\beta}$ is $\beta(M-N+1,N-1)$. Since the BF
vector ${\bf v}_{ZF,k}$ is chosen orthogonal to the $(M-1)$ other
channel vectors $\{ \bh_j \}_{j \ne k}$, each of which is an iid
isotropic vector, it is isotropic and is \textit{independent} of
$\tilde{\bh_k}$.  By Lemma
\ref{lemma-effective_isotropic} the same is also true of $\bv_k$ and
$\tilde{\bheff_k}$, and therefore we can substitute $|
\tilde{\bh_k}^H {\bf v}_{ZF,k}|^2$ for $| (\bheff_k)^H \bv_k |^2$.
Finally, note that the product $X_{\beta} ||\bh_k||^2$ is
$\chi^2_{2(M-N+1)}$ because $||\bh_k||^2$ is $\chi^2_{2M}$, and
therefore $X_{\beta} ||\bh_k||^2$ and $|| \bheff_k ||^2$ have the
same distribution.
Using (\ref{eq-a1}) we get:
\begin{eqnarray*}
\Delta_a
\leq E_{{\bf H}} \left[ \log_2 \left( \frac{ 1 + \rho
||\bh_k||^2 | \tilde{\bh_k}^H {\bf v}_{ZF,k}|^2}{1 + \rho X_{\beta}
 ||\bh_k||^2 | \tilde{\bh_k}^H {\bf v}_{ZF,k}|^2 }
\right) \right]
\leq - E \left[ \log_2 \left( X_{\beta} \right) \right]
= \log_2 e \left( \sum_{l=M-N+1}^{M-1} \frac{1}{l} \right),
\end{eqnarray*}
where we have used $ \log_2 \left( X_{\beta} \right) =
\log_2 \left( \frac{\chi^2_{2M}}{ \chi^2_{2(M-N+1)}} \right)$
and results from \cite{Tulino_Verdu} to to compute $E \left[ \log_2 \left( X_{\beta} \right) \right]$.

Finally, we upper bound $\Delta_b$ using Jensen's inequality:
\begin{eqnarray*} \Delta_b &\leq& \log_2 \left(1 + E \left[ \sum_{j \ne k} \rho | (\bheff_k)^H
\bv_j |^2 \right] \right) \\
 &=& \log_2 \left(1 +   \rho (M-1) E \left[||(\bheff_k)||^2 \right]
E \left[ |(\tilde{\bheff}_k)^H \bv_j |^2 \right] \right) \\
&=& \log_2 \left(1 +  \rho (M-1)(M-N+1) E \left[
|(\tilde{\bheff}_k)^H \bv_j |^2
\right] \right) \\
&=& \log_2 \left(1 +  \rho (M-N+1) E \left[ \sin^2 \left(
\angle \left(\tilde{\bheff}_k, \bh_k \right) \right) \right] \right),
\end{eqnarray*}
where the final step uses Lemma 2 of \cite{Jindal_Finite_BC_Journal}
to get $E \left[ |(\tilde{\bheff}_k)^H \bv_j |^2 \right] = \frac{1}{M-1} E
\left[ \sin^2 \left( \angle \left(\tilde{\bheff}_k, \bh_k \right) \right)
\right]$.

\section{Generation of Numerical Results} \label{sec-numerical_method}
Rather than performing brute force simulation of RVQ, which becomes
infeasible for $B$ larger than 15 or 20, the statistics of RVQ
can be exploited to efficiently and exactly emulate the quantization process:
\begin{enumerate}
\item Draw a realization of the quantization error $Z$ according to its
known CDF (Lemma \ref{lemma-quant_error}).
\item Draw a realization of the corresponding quantization vector according to:
\begin{eqnarray*}
\hat{\bh_k} = \left(\sqrt{1 - Z}\right) {\bf u} + \sqrt{Z} {\bf s}
\end{eqnarray*}
where ${\bf u}$ is isotropic in span($\bH_k$), ${\bf s}$ is isotropic
in the nullspace of span($\bH_k$), with ${\bf u}$, ${\bf s}$ independent.

\end{enumerate}
These steps exactly emulate step 2 of QBC.  The same procedure can also be used to emulate
antenna selection, quantization of the maximum eigenvector, and no combining ($N=1$).
Because the CDF of the quantization error is not known for MRC,
MRC results are generated using brute force RVQ.

\section{Rate Gap with Receiver Estimation Error} \label{sec-rategap_rx_error}

We bound the rate gap using the technique of \cite{Caire_Jindal_ISIT07}.
We first restate the result of Theorem \ref{thm-gap} in terms of
the interference terms $E \left[|(\bheff_k)^H \bv_j |^2 \right]$:
\begin{eqnarray} \label{eq-ratebound2}
\Delta R &\leq& \log_2 e \left( \sum_{l=M-N+1}^{M-1} \frac{1}{l}
\right)+ \log_2 \left(1 +  P \frac{M-1}{M} E \left[
|(\bheff_k)^H \bv_j |^2
\right] \right).
\end{eqnarray}
Using the representation of the channel matrix given in (\ref{eq-HRX}),
we can write the interference term as:
\begin{eqnarray*}
(\bheff_k)^H \bv_j &=& \left (\bH_k {\boldsymbol{\gamma}}_k \right)^H \bv_j
=  \left( \hat{\bG_k}{\boldsymbol{\gamma}}_k \right)^H \bv_j
+ \left( {\bf e}_k {\boldsymbol{\gamma}}_k \right)^H \bv_j.
\end{eqnarray*}
The first term in the sum is statistically identical to the interference term
when there is perfect CSIR, while the second term represents the additional
interference due to the receiver estimation error. Because the noise and
the channel estimate are each zero-mean and are independent we have:
\begin{eqnarray*}
E \left[ |(\bheff_k)^H \bv_j |^2 \right] &=&
E \left[ \left| \left( \hat{\bG_k}{\boldsymbol{\gamma}}_k \right)^H \bv_j \right|^2 \right] +
E \left[ \left| \left( {\bf e}_k {\boldsymbol{\gamma}}_k \right)^H \bv_j \right|^2 \right]
\end{eqnarray*}
The first term comes from the perfect CSIR analysis and is equal to
the product of $\frac{1}{M-1}$ and the expected quantization error with
perfect CSIR.  Because $\boldsymbol{\gamma}_k$ and  $\bv_j$ are each
unit norm and ${\bf e}_k$ is independent of these two vectors, the
quantity $\left( {\bf e}_k {\boldsymbol{\gamma}}_k \right)^H \bv_j$
is (zero-mean) complex Gaussian with variance $(1 + \beta P)^{-1}$,
which is less than $(1 + \beta P)^{-1}$. We finally reach
(\ref{eq-gap_approx2}) by using the approximation for quantization
error from (\ref{eq-approx_quant}) and plugging into
(\ref{eq-ratebound2}), and noting that $(1 + \beta P)^{-1} \approx
(\beta P)^{-1}$.


\end{document}